\documentclass[10pt, conference]{IEEEtran}
\IEEEoverridecommandlockouts
\usepackage{pifont}
\usepackage{tabu}
\usepackage{supertabular,booktabs}

\usepackage{longtable}

\usepackage[hyphens]{url}
\urlstyle{same}

\usepackage{multicol}
\usepackage{amsmath, amssymb, bm, cite, epsfig, psfrag, mathtools}
\usepackage{epstopdf}
\usepackage{changepage}
\usepackage{graphicx}
\usepackage{fancyhdr}
\usepackage{bbm}
\newcommand*{\Scale}[2][4]{\scalebox{#1}{$#2$}}%
\usepackage{algorithm,algorithmic}
\usepackage{array}
\usepackage[margin=10pt,font=small]{caption}
\usepackage{bbm}
\usepackage{multirow}
\usepackage[usenames,dvipsnames]{xcolor}

\usepackage{subfigure}
\usepackage{etoolbox}
\usepackage{pbox}
\usepackage[width=0.48\textwidth]{caption}
\usepackage{colortbl}
\usepackage{bm}
\usepackage{caption}
\captionsetup[table]{labelsep=space}
\usepackage{amsmath}
\usepackage{color}

\usepackage{enumitem}
\usepackage{datetime}
\usepackage{amsthm}
\usepackage{dsfont}
\usepackage{mathtools}
\usepackage[T1]{fontenc}

\newtoggle{conference}
\togglefalse{conference} 
\interdisplaylinepenalty=2500
\usepackage[letterpaper,left=0.75in,right=0.75in,top=0.75in,bottom=0.75in,footskip=.25in]{geometry}
\graphicspath{{figures/}}

\def\beq{\begin{equation}}
\def\eeq{\end{equation}}
\def\beqa{\begin{eqnarray}}
\def\eeqa{\end{eqnarray}}
\def\beqan{\begin{eqnarray*}}
\def\eeqan{\end{eqnarray*}}

\def\EE{{\mathbb{E}}}
\def\PP{{\mathbb{P}}}

\def\argmin{\mathop{\mathrm{arg\,min}}}

\newcommand{\Fc}{{\cal F}}

\newtheorem{definition}{Definition}
\newtheorem{theorem}{Theorem}

\newtheorem{example}{Example}

\setlength{\unitlength}{1mm}


\def\FF{{\mathbb{F}}}

\def\tm1{t\! - \! 1}
\def\tp1{t\! + \! 1}

\def\dbf{\mathbf{d}}

\def\fbf{\mathbf{f}}

\def\pbf{\mathbf{p}}

\def\qbf{\mathbf{q}}

\def\Uc{\mathcal{U}}

\def\fsf{ {\sf f}}

\setlength{\belowcaptionskip}{-2pt}
\begin{document}

\bibliographystyle{IEEEtran}


\title{\Huge Correlation-Aware Distributed Caching and Coded Delivery}

\author{P. Hassanzadeh, A. Tulino, J. Llorca, E. Erkip
\thanks{P. Hassanzadeh  and  E. Erkip are with the ECE Department of New York University, Brooklyn, NY. Email: \{ph990, elza\}@nyu.edu}
\thanks{J. Llorca  and A. Tulino are with Bell Labs, Nokia, Holmdel, NJ, USA. Email:  \{jaime.llorca, a.tulino\}@nokia.com}
\thanks{A. Tulino is with the DIETI, University of Naples Federico II, Italy. Email:  \{antoniamaria.tulino\}@unina.it}
}

\maketitle

\begin{abstract}
Cache-aided coded multicast 
leverages side information at wireless edge caches to efficiently serve multiple groupcast demands via common multicast transmissions, 
leading to load reductions that are proportional to the aggregate cache size.
However, the increasingly unpredictable and personalized nature of the content that users consume challenges the efficiency of existing caching-based solutions in which only {\em exact} content reuse is explored.
This paper generalizes the cache-aided coded multicast problem to a source 
compression with distributed side information problem that specifically accounts for the correlation among the content files.
It is shown how joint file compression 
during the caching and delivery phases 
can provide load reductions that go beyond those achieved with existing schemes. 
This is accomplished through a lower bound on the fundamental rate-memory trade-off as well as a correlation-aware achievable scheme, shown to significantly outperform state-of-the-art correlation-unaware solutions, while approaching the limiting rate-memory trade-off.
\end{abstract}

\section{Introduction and Setup}~\label{sec:Introduction}

We consider a broadcast caching network with one sender (e.g., base station) connected to $n$ receivers (e.g., access points or user devices) $\mathcal U=\{1,\dots,n\}$ via a shared error-free multicast link.
The sender has access to a file library $\mathcal F=\{1,\dots,m\}$ composed of $m$ files. Each file $f\in\mathcal F$ is represented by a vector of independent and identically distributed (i.i.d.) binary symbols of length $F$, ${\sf W}_f\in\mathbb F_2^F$, and therefore $H({\sf W}_f)=F$ \footnote{The results derived in this paper can be easily extended to the case in which the symbols within each file are not necessarily binary and independent.}. 
Each receiver has a cache of size $MF$ bits for a real number $M\in[0,m]$, and receivers request files in an 
i.i.d. manner according to a demand distribution $\qbf = (q_1,\dots,q_m)$,
where $q_f$ denotes the probability of requesting file $f\in\mathcal F$. 
The file library realization is denoted by $\{W_f  :f\in\mathcal F\}$, and we further assume that files can be correlated, 
i.e., $H({\sf W}_{f_i}|{\sf W}_{f_j})\leq F$, $\forall f_i,f_j\in\mathcal F$, and hence 
$H({\sf W}_1,\dots,{\sf W}_m)\leq mF$.
Such correlations are especially relevant among content files of the same category, such as episodes of a TV show or same-sport recordings, which, even if personalized,
may share common backgrounds and scene objects. 
Hence, we assume that the joint distribution of the library files, denoted by $P_{\mathcal W}$, is not necessarily the product of the file marginal distributions.

Consistent with the existing information-theoretic literature on cache-aided networks ~\cite{maddah14fundamental, maddah13decentralized, ji14average,  ji15order,ji14groupcast}, we assume that the network operates in two phases:
a caching (or placement) phase taking place at network setup, in which caches are populated with content from the library, followed by a delivery phase where the network is used repeatedly in order to satisfy receiver demands. The design of the caching and delivery phases forms what is referred to as a {\em caching scheme}.

The same network setup, but under the assumption of independent library files, was studied in \cite{maddah14fundamental, maddah13decentralized, ji14average, ji15order, ji14groupcast}. These works characterized the order-optimal rate-memory trade-off for both worst-case \cite{maddah14fundamental, maddah13decentralized} and random demand settings \cite{ji14average, ji15order, ji14groupcast}. In this context, each file in the library is treated as an independent piece of information, compressed up to its entropy $F$. During the caching phase, 
parts of the library files are cached at the receivers according to a properly designed {\em caching distribution}. 
The delivery phase consists of computing an {\em index code}, in which the sender compresses the set of requested files into a multicast codeword, only exploring perfect matches (``correlation one") among (parts of) the requested and the cached files, ignoring all other correlations that exist among the different (parts of) files.

The goal of this paper is to investigate the additional gains that can be obtained by exploring the correlations among the library content, in both caching and delivery phases.
An initial step
towards the characterization of the rate-memory region for a correlated library was given in \cite{timo2015rate}.
The authors focus on lossy reconstruction in a scenario with only two receivers and one cache. 
In contrast, we address the more general multiple cache scenario, and focus on correlation-aware lossless reconstruction.
We formulate the problem via information-theoretic tools where the caching phase allows for placement of arbitrary functions of a correlated library 
at the receivers, while the delivery phase becomes equivalent to a source coding problem with distributed side information. 
A correlation-aware scheme then consists of receivers storing content pieces based on their popularity as well as on their correlation with the rest of the library in the caching phase, and receiving compressed versions of the requested files according to the information distributed across the network and their joint statistics during the delivery phase. 
We introduce a lower bound for the rate-memory trade-off as well as an achievable scheme 
that includes a library compression step proceeding correlation-aware versions of the caching and delivery phases previously proposed in \cite{ji14average, ji15order}. 
We provide comparisons with state of art correlation-unaware schemes and an alternative correlation-aware achievable strategy introduced in \cite{ISTC2016}.


The paper is organized as follows. 
Sec. \ref{sec:Problem Formulation} presents the information-theoretic problem formulation. Sec. \ref{sec:Lower bound} provides a lower bound on the fundamental rate-memory trade-off. An achievable scheme is described in Sec. \ref{sec:LibraryCompressed Scheme}, 
its performance is numerically validated in Sec. \ref{sec:Simulations}, and the conclusion follows in Sec. \ref{sec:Conclusion}.

\section{Problem Formulation}~\label{sec:Problem Formulation}

\noindent{\bf Notation:} For ease of exposition, we use $\{A_i\}$ to denote the set of elements $\{A_i:i\in\mathcal I\}$, with $\mathcal I$ being the domain of index $i$. $\FF_2^*$ denotes the set of finite length binary sequences. We use $[x]^+$ to denote $\max\{x,0\}$.

We consider the following information-theoretic formulation of a caching scheme for $n$ receivers and $m$ files: 
\begin{itemize}
\item Initially, a realization of the library $\{W_f\}$ is revealed to the sender.
\item {\textbf{Cache Encoder:}}
At the sender, the cache encoder computes the content to be placed at the receiver caches by using the set of functions
$\{Z_u: \FF_2^{m F} \rightarrow \FF_2^{MF} : u \in \Uc\}$,  such that $Z_u(\{W_f\})$ is the content cached at receiver $u$, and $M$ is the (normalized) cache capacity or alternatively memory.
The cache configuration $\{Z_u\}$ is designed jointly across receivers, 
taking into account global system knowledge such as the number of receivers and their cache sizes, the number of files, their aggregate popularity, and their joint distribution $P_{\mathcal W}$. Computing $\{Z_u\}$ and populating the receiver caches constitutes the caching phase, which is assumed to happen during off-peak hours without consuming actual delivery rate.
\item{\textbf{Multicast Encoder:}}
Once the caches are populated, the network is repeatedly used for different demand realizations. At each use of the network, a random demand vector $\fsf = (\fsf_1,\dots,\fsf_n) \in \mathcal F^{n}$ is revealed to the sender. We assume that $\fsf$ has i.i.d components distributed according to $\qbf$ and we denote by $\fbf =(f_1,\dots,f_n)$ its realization.
The multicast encoder is defined by a fixed-to-variable encoding function $X : \Fc^n \times \FF_2^{mF} \times \FF_2^{nMF} \rightarrow \FF_2^*$, such that $X(\fbf, \{W_f\},\{Z_u\})$ is the transmitted codeword generated according to demand realization $\fbf$, library realization $\{W_f\}$, cache configuration $\{Z_u\}$, and joint file distribution $P_{\mathcal W}$.
\item{\textbf{Multicast Decoders:}}
Each receiver $u\in\mathcal U$ recovers its requested file $W_{f_u}$ using the received multicast codeword and its cached codeword, as $\widehat{W}_{f_u} = \rho_u(\fbf, X,Z_u)$, where $\rho_u :  \Fc^n \times \FF_2^* \times \FF_2^{MF}  \rightarrow \FF_2^{F}$ denotes the decoding function of receiver $u$.
\end{itemize}

The worst-case (over the file library) probability of error of the corresponding caching scheme is defined as
\begin{align} \label{perr}
& P_e^{(F)} = \sup_{\{W_f : f \in \mathcal F\}} \PP \left(\widehat{W}_{f_u}  \neq W_{f_u} \right). \notag
\end{align}
In line with previous work { \cite{ji14average,ji15order,ji14groupcast}}, 
the (average) rate of the overall {caching scheme} is defined as
\begin{equation} \label{average-rate}
R^{(F)} = \sup_{\{ W_f : f \in \mathcal F\}} \; \frac{\EE[J(X)]}{F}.
\end{equation}
where $J(X)$ denotes the length (in bits) of the multicast codeword $X$.
\begin{definition} \label{def:achievable-rate}
A rate-memory pair $(R,M)$ is {\em achievable} if there exists a sequence of caching schemes for cache capacity (memory) $M$ and increasing file size $F$ such that
{$\lim_{F \rightarrow \infty} P_e^{(F)} = 0 \notag$, and $\limsup_{F \rightarrow \infty} R^{(F)} \leq  R.$}
\end{definition}
\begin{definition} \label{def:infimum-rate}
The rate-memory region is the closure of the set of achievable rate-memory pairs $(R,M)$. The rate-memory function $R(M)$ is the infimum of all rates $R$ such that $(R,M)$ is in the rate-memory region for memory $M$.
\end{definition}
In this paper, we find a lower bound and an upper bound on the rate-memory function $R(M)$, given in Theorems \ref{thm:LowerBound} and \ref{thm:uniform compressed library} respectively, and design a caching scheme (i.e., a cache encoder and a multicast encoder/decoder) that results in an achievable rate $R$ close to the lower bound.

\section{Lower bound}~\label{sec:Lower bound}
In this section, we derive a lower bound on the rate-memory function under uniform demand distribution using a cut-set bound argument on the broadcast caching-demand augmented graph \cite{llorca2013network}. 
To this end let $\mathcal D^{(j)}$ denote the set of demands with exactly $j$ distinct requests. 

\begin{theorem}\label{thm:LowerBound}
For the broadcast caching network with $n$ receivers, library size $m$, uniform demand distribution, and joint probability
$P_{\mathcal W}$, 
\begin{eqnarray} \label{eq: converse}
&& \! \! \! \!  \! \! \! \!  R(M) \geq 
\liminf\limits_{F \rightarrow \infty} \max_{\ell \in \left\{1, \cdots, \gamma\right \} } \! \! 
P_\ell \, \frac{ 
\Scale[0.9]{ \left[
 H \! \left ( \!  \left \{ {\sf W}_f:  \, f  \in \mathcal F  \right  \}  \right )  - \ell M F \right]^+ } }{\left \lfloor \frac{m}{\ell} \right \rfloor    F } \notag
\end{eqnarray}
where  $ \gamma = \min\left\{n,m\right\}$, 
$P_\ell$$=$$P( \dbf \in \bigcup_{j\geq \ell} \mathcal D^{(j)}  )$ 
and  $H \! \left ( \!  \left \{ {\sf W}_f:  \, f  \in \mathcal F  \right  \}  \right )$ is the entropy of the entire library.
\end{theorem}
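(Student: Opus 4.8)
The plan is to use a cut-set bound on the broadcast caching-demand augmented graph of \cite{llorca2013network}, parametrized by an integer $\ell \in \{1,\dots,\gamma\}$. For a fixed $\ell$, I would isolate a cut that separates a well-chosen set of $\ell$ receivers (together with their caches) from the source: across this cut flow the $\ell$ cached signals $Z_{u_1},\dots,Z_{u_\ell}$, of total size at most $\ell M F$ bits, plus the multicast codeword $X$ transmitted over the shared link. The key observation is that, over a sequence of $\lfloor m/\ell \rfloor$ carefully constructed demand realizations, these $\ell$ receivers must be able to recover $\lfloor m/\ell \rfloor \cdot \ell$ \emph{distinct} files of the library, i.e. essentially the whole library. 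Concretely, I would partition $\mathcal F$ into $\lfloor m/\ell \rfloor$ groups of $\ell$ files each, and over successive network uses present a demand in which the $\ell$ isolated receivers request the $\ell$ files of one group. Since the caches are fixed across all demands, the concatenation of the $\lfloor m/\ell \rfloor$ multicast codewords, together with the fixed $\ell$ cached signals, determines all the requested files with vanishing error probability.

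Next I would invoke Fano's inequality and the standard converse machinery: the total number of delivery bits over the $\lfloor m/\ell \rfloor$ uses, plus the $\ell M F$ cache bits, must be at least $H(\{{\sf W}_f : f \in \mathcal F\}) - F\,\epsilon_F$ for some $\epsilon_F \to 0$ (here is where the joint entropy of the correlated library enters, rather than $mF$). Hence the \emph{total} expected delivery load over these $\lfloor m/\ell \rfloor$ uses is at least $\big[H(\{{\sf W}_f\}) - \ell M F\big]^+ - F\epsilon_F$, so the \emph{average} rate satisfies $R(M) \gtrsim \big[H(\{{\sf W}_f\}) - \ell M F\big]^+ \big/ (\lfloor m/\ell\rfloor F)$. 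Because demands are random and uniform, one only gets to force this bad sequence of demands on an event of probability at least $P_\ell = P(\dbf \in \bigcup_{j\ge\ell}\mathcal D^{(j)})$ — the event that the demand vector has at least $\ell$ distinct requests, so that $\ell$ receivers can indeed be assigned $\ell$ distinct files; on the complementary event we simply lower-bound the contribution by zero. This yields the factor $P_\ell$ multiplying the ratio, and taking the maximum over $\ell$ and the $\liminf$ over $F$ gives the claimed bound.

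The main obstacle, and the step requiring the most care, is making the ``force a bad demand sequence'' argument rigorous in the \emph{average-rate} (random-demand) setting rather than the worst-case setting. In the worst-case formulation one simply picks the adversarial demand; here one must argue about $\EE[J(X)]$ over the random demands, condition on the event $\{\dbf$ has $\ge \ell$ distinct requests$\}$, and relate the average rate of a single network use to the aggregate information that must cross the cut over the block of $\lfloor m/\ell \rfloor$ uses. Concretely I would consider a block of $\lfloor m/\ell\rfloor$ independent uses, apply the cut-set/Fano bound to the block, and use linearity of expectation together with the stationarity of the demand process to convert the block bound into a per-use (average) rate bound, with $P_\ell$ emerging as the fraction of uses on which the cut argument is active. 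A secondary technical point is the $\liminf_{F\to\infty}$: since $H(\{{\sf W}_f\})$ may itself depend on $F$, the bound must be stated with the $\liminf$ outside the maximum, consistent with Definition \ref{def:achievable-rate}.
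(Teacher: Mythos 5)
Your proposal follows essentially the same route as the paper's proof: a cut-set bound across $\ell$ receivers' caches and a family of $\lfloor m/\ell\rfloor$ demands with $\ell$ distinct requests whose union covers the library, giving $\sum_i R_\ell(M,\dbf_i)F + \ell M F \geq H(\{{\sf W}_f\})$, combined with the weighting $R(M)\geq P_\ell\, R_\ell(M)$ to handle the random uniform demands, then maximizing over $\ell$ and taking $\liminf_{F\to\infty}$. The points you flag as delicate (converting the block/covering bound into a bound on the average rate via uniformity of demands) are exactly the steps the paper handles by averaging $R_\ell(M,\dbf)$ over $\mathcal D^{(\ell)}$, so your argument is correct and matches the paper's.
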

\noindent
\begin{proof}
\noindent 
The proof of Theorem \ref{thm:LowerBound} is given in Appendix \ref{App: Lower Bound}. 
\end{proof}

\section{Library Compressed Cache-Aided Coded Multicast (Comp-CACM)}\label{sec:LibraryCompressed Scheme}
In this section we describe a correlation-aware caching scheme, termed Comp-CACM. 
The proposed Comp-CACM scheme consists of a {\em Correlation-Aware Library Compressor}, a {\em Random Fractional  Cache Encoder} and a {\em Coded Multicast Encoder}.
In this scheme: 
\begin{itemize} 
\item { \em Step.1}: 
The library files are divided into two subsets, {\em I-files} and {\em P-files}. 
\item { \em Step.2}: 
The P-files are conditionally compressed with respect to the I-files. 
\item { \em Step.3}: 
 The compressed files are divided into packets, and a portion of these packets is cached at each receiver according to a caching distribution.
 \item { \em Step.4}: 
During the delivery phase, the receivers requesting an I-file receive a linear combination of the uncached packets of the corresponding I-file, and those requesting a P-file receive a linear combination of the uncached packets of the corresponding P-file and the uncached packets of the associated I-file. 
\end{itemize}   
As described in more detail in the subsequent sections, Steps 1 and 2 are carried out by the Correlation-Aware Library Compressor, Step 3 is done by the Random Fractional Cache Encoder and the Coded Multicast Encoder performs Step 4.
Differently from existing correlation-unaware schemes,
in Comp-CACM, 
the cache encoder operates on a compressed version of the library obtained by the Correlation-Aware Library Compressor. Even though the delivery phase is still based on computing an index code, this phase needs to be properly modified to account for the fact that the content stored in the receiver caches is a compressed version of the library files. 

\subsection{Correlation-Aware Library Compressor}\label{comp}
In Steps 1 and 2, prior to passing the library to the cache encoder, the correlations among the library content are exploited to compress the files according to their joint distribution $P_{\mathcal W}$ as well as their popularity, resulting in a library composed of {\em inter-compressed} files.

\begin{definition}({\bf $\delta$-Correlated Files})
\label{eq:def-detalcor}
For a given threshold $\delta\leq 1$ and file size $F$, we say file ${\sf W}_{f}$ is $\delta$-correlated with file ${\sf W}_{f'}$ if $H({\sf W}_{f},{\sf W}_{f'}) \leq (1+\delta) F$ bits.
\end{definition}



Using the above notion of $\delta$-correlated files, the file library $\mathcal F = \mathcal F_I\cup \mathcal F_P$ is partitioned into {I-files} $\mathcal F_I$, and inter-compressed {P-files} $\mathcal F_P$, as follows: 
\begin{enumerate}
\item Initialize the file set to the original library $\mathcal F$.
\item Compute the {\em aggregate popularity} for each file in the file set as the probability that at least one receiver requests at least one file from its {\em $\delta$-file-ensemble}, defined as the given file and all its $\delta$-correlated files in the file set. 
\item Choose the file with the highest {aggregate popularity}, $W_{f^*}$, 
and assign it to the I-file set $\mathcal F_I \subseteq\mathcal F$. All other files in the $\delta$-file-ensemble of $W_{f^*}$ are inter-compressed using  $W_{f^*}$ as reference and assigned to the P-file set $\mathcal F_P \subseteq\mathcal F$.
\item Update the file set by removing all files contained in the $\delta$-file-ensemble of $W_{f^*}$.
\item Repeat steps  2-4 until the file set is empty.
\end{enumerate}

The partitioning procedure described above results in a compressed library, $\{\overline{W}_f:   f\in\mathcal F\}$, composed of I-files $\mathcal F_I$ 
each with entropy $F$, and P-files $\mathcal F_P$ 
inter-compressed with respect to their corresponding I-files, with conditional entropy less than $\delta F$. The value of 
 $\delta$  will be further optimized as a function of the system parameters ($n,m,M,\qbf,\mathcal P_W$) in order to minimize the (average) rate of the overall proposed {caching scheme} given in (\ref{average-rate}). 

The following example illustrates the compression of the library for a simple system setup. 
\begin{example}\label{ex: Compressed Library}
Consider a library consisting of $m=4$ uniformly popular files $\{W_1,W_2,W_3,W_4\}$ each with entropy $F$ bits. We assume that the file pairs $\{W_1 ,W_2\}$ and $\{W_3, W_4\}$ are independent, while correlations exist between $W_1$ and $W_2$, and between $W_3$ and $W_4$. Specifically, $H({\sf W}_1|{\sf W}_2)=H({\sf W}_2|{\sf W}_1)=F/4$ and $H({\sf W}_3|{\sf W}_4)=H({\sf W}_4|{\sf W}_3)=F/4$. 
Since files are equally popular and each file is correlated with one other file, each file has the same aggregate popularity. We choose file $W_2$ as the first I-file and assign its $\delta$-correlated file, $W_1$, to the P-file set. 
The file set is then updated to $\{W_3,W_4\}$. Again, either file can be assigned to the I-file set as they have the same aggregate popularity. We choose $W_4$ as the second I-file and assign $W_3$ to the P-file set. 
This results in the compressed library $\{\overline{W}_f\}=\{\overline W_1,W_2,\overline W_3,W_4\}$ with $\mathcal F_I=\{2,4\}$ and $\mathcal F_P=\{1,3\}$. $\overline W_1 (\overline W_3)$ denotes P-file $W_1(W_3)$ inter-compressed with respect to I-file $W_2(W_4)$. 
\end{example}

\subsection{Random Fractional Cache Encoder} 

After compressing the library, in Step 3, Comp-CACM applies a slightly modified version of the random fractional cache encoder 
introduced in \cite{ji15order,ji14average} to the compressed library. 

Specifically, given the compressed library $\{\overline{W}_f\}$, the cache encoder divides each file into equal-size packets, such that each file $ \overline{W}_f$ is divided into $H(\overline{W}_f)/b\in \mathbb N$ packets of length $b$ bits. The packet length $b$ is chosen such that all files in the compressed library (of possibly different lengths) are divided into an integer number of packets. The sender then populates the caches by randomly selecting and storing $p_fM$ fraction of the packets of each file $\overline W_f$,$f\in{\mathcal F}$ at each receiver, where $\pbf = ( p_{1},\dots,p_{m} )$, referred to as the {\em caching distribution}, is a vector with elements $0\leq p_{f}\leq 1/M$, satisfying the cache constraint as $\sum_{f \in \mathcal F_I} p_{f}+\delta\sum_{f \in \mathcal F_P} p_{f} = 1 $.\footnote{As in \cite{ji15order}, the caching distribution defines the number of packets cached from each file; however, since here I-files and inter-compressed P-files have different lengths and are therefore divided into a different number of packets, the cache constraint differs from that of \cite{ji15order}.} 
 The caching distribution is optimally designed to minimize the rate of the corresponding correlation-aware delivery scheme as expressed in (\ref{average-rate}) while taking into account global system parameters ($n,m,M,\qbf,P_{\mathcal W}$). The rate expression in (\ref{average-rate}) is normalized to the file size, and is independent of packet length $b$ when $F\rightarrow\infty$. In the following, we denote the $i^{th}$ packet of file $W_f, f\in\mathcal F$ by $W_{fi}$.


\begin{example}\label{ex: Comp RAP}
In Example \ref{ex: Compressed Library} further assume that the sender is connected to $n=2$ receivers $\{u_1, u_2\}$ each with a cache size $M=1$. For the compressed library $\{\overline W_1,W_2,\overline W_3,W_4\}$ described in Example \ref{ex: Compressed Library}, we assume packet length $b=F/4$ and a caching distribution such that $p_{2}=p_{4}=1/4$ and $p_{1}=p_{3}=1$. This corresponds to splitting files $W_2$ and $W_4$ into $4$ packets, each with entropy $F/4$, and randomly storing $1$ packet from each file at each receiver. Files  $\overline W_1$ and $\overline W_3$ comprise one packet each and are both cached at the receivers. We assume that packets $\{ \overline W_{1},W_{21}, \overline W_{3},W_{41}\}$  are cached at $u_1$ and  $\{\overline W_{1},W_{22},\overline W_{3},W_{42}\}$ are cached at $u_2$, as shown in Fig \ref{fig:Examples}.
\end{example}

\begin{figure}
  \centering
  \includegraphics[width=3.5in]{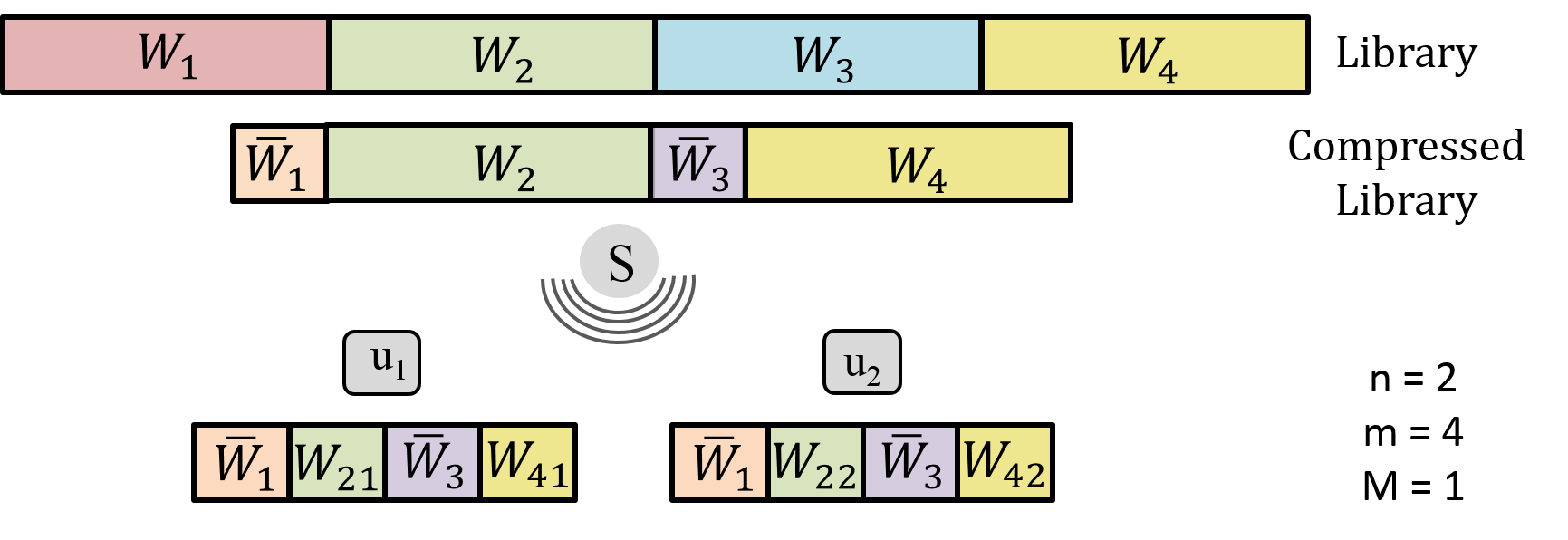}
  \caption{Network setup and cache configuration of Examples \ref{ex: Compressed Library}-\ref{ex: Comp GCC}.
} 
  \label{fig:Examples}
\end{figure}

\subsection{Coded Multicast Encoder} 

The goal of the multicast encoder of Step 4 is to generate a coded multicast codeword that allows each receiver to losslessly reconstruct its requested file, given that the cached content and the resulting conflict graph is composed of inter-compressed file pieces. 
Each receiver recovers its demand using the multicast codeword and its cache content. The multicast codeword can be
a linear combination of the  packets needed by each receiver for demand recovery. The objective is to find the minimum multicast transmissions (termed as the multicast rate) given a set of demands and the cache content of all receivers. 
The problem can be formulated as the coloring of the corresponding conflict graph. However, differently from the classical setting described in \cite{ji14average, ji15order, ji14groupcast}, where each requested packet corresponds to a vertex in the conflict graph, here the vertices are  composed of: 1) for each receiver requesting an I-file, the uncached packets of the corresponding I-file, and 2) for each receiver requesting a P-file, the uncached packets of the corresponding P-file and the uncached packets of the associated I-file \cite{Journal}.  As in the classical setting, a directed edge between two vertices of the graph means that one receiver's desired packet interferes with some other receiver's desired packet (i.e is not cached at that receiver).


\begin{example}\label{ex: Comp GCC}
For the cache configuration given in Example \ref{ex: Comp RAP}, and demand realization $\fbf = (1, 2)$, the sender multicasts $W_{21} \oplus W_{22} , W_{23} , W_{24}$, which corresponds to a total rate of $R = 0.75$. The transmitted codeword enables the reconstruction of file $W_{2}$ at $u_1$, and with $\overline W_{1}$ being in its cache, receiver $u_1$ is able to losslessly reconstruct $W_1$. Receiver $u_2$ is able to reconstruct $W_2$ using the received packets. 
However, a correlation-unaware scheme (e.g., \cite{maddah13decentralized,ji14average}) would first compress the files separately and then cache $1/4^{th}$ of each file at each receiver, which would result in a total rate $R=1.25$ regardless of the demand realization \cite{maddah13decentralized,ji14average}.
\end{example}

\subsection{Rate Upper Bound for Comp-CACM}\label{subsec:library compressed rate}
In this section, we upper bound the rate-memory function given in Definition \ref{def:infimum-rate} by evaluating the performance of the Comp-CACM under the uniform demand scenario.\footnote{Closed-form expressions for the achievable rate of Comp-CACM for the general setting described in Sec. \ref{sec:Introduction} are given in \cite{Journal}.} 
In stating the theorem, we assume that the library file correlation is such that for a given $\delta$, there are at least $\kappa$ files in the $\delta$-file-ensemble of each file $f\in\mathcal F$. In other words, $\kappa$, which is a function of $\delta$ and of the joint library distribution $P_{\mathcal W}$, is the largest value such that for each file $f\in\mathcal F$, there are at least $\kappa$ files $\delta$-correlated with $f$.
 
\begin{theorem}\label{thm:uniform compressed library}
Consider a broadcast caching network with $n$ receivers, library size $m$, and uniform demand distribution. Then, the rate-memory function, $R(M)$, is upper bounded as
\begin{flalign}
&R(M) \leq  \inf_{\delta} \left \{\min\{\psi(\delta,p_I^*,p_P^*,M), \bar m\} \right  \},& \label{upperRstar}\\
\text{where}\notag\\
& \psi(\delta,p_I,p_P,M)  = \sum_{\ell=1}^{n} \binom{n}{\ell} \Big(\lambda_{\ell}(p_I) + \delta \lambda_{\ell}(p_P)\Big),  \notag\\
& \{p_I^*,p_P^*\} = \argmin_{p_I,p_P}\{\psi(\delta,p_I,p_P,M), \bar m\} \notag\\
&  \qquad\qquad\quad \mbox{\rm{s.t.}}  \quad   p_I + \delta(\kappa-1) p_P = \frac{\kappa}{m}  \label{kappa}\\
& \bar m = m \Big( 1- \Big(1-\frac{1}{m}\Big)^{n} \Big),\notag\\ 
\text{with} \notag\\
& \lambda_{\ell}(p) \triangleq {(pM)}^{(\ell-1)}(1-{pM})^{(n-\ell+1)}.  \notag
\end{flalign}
with $\kappa$ in (\ref{kappa}) being a function of $\delta$ and $P_{\mathcal W}$.
\end{theorem}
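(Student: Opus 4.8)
The plan is to analyze the Comp-CACM scheme phase by phase and bound the expected length of the multicast codeword, then optimize over the free parameters $\delta$, $p_I$, $p_P$. First I would fix a threshold $\delta$ and run the Correlation-Aware Library Compressor, yielding the compressed library $\{\overline W_f\}$ with I-files $\mathcal F_I$ of entropy $F$ and P-files $\mathcal F_P$ of conditional entropy at most $\delta F$; the hypothesis that each file has at least $\kappa$ files in its $\delta$-file-ensemble means that roughly a $1/\kappa$ fraction of the library ends up as I-files. Under the uniform demand, a useful reduction is that delivering a P-file requires (conditionally) delivering its reference I-file as well, so the conflict graph has, for each request, either the uncached I-packets (if the demand is an I-file) or the uncached P-packets plus the uncached associated I-packets (if the demand is a P-file). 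The key step is then to show that the optimal coloring / local chromatic number of this enlarged conflict graph, normalized by $F$, is upper bounded (in expectation over the random placement and the random demand) by $\psi(\delta,p_I,p_P,M)=\sum_{\ell=1}^n\binom n\ell(\lambda_\ell(p_I)+\delta\lambda_\ell(p_P))$, where $\lambda_\ell(p)=(pM)^{\ell-1}(1-pM)^{n-\ell+1}$.

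The heart of the argument mirrors the achievability analysis of \cite{ji14average,ji15order}: one partitions the vertices of the conflict graph according to the number $\ell$ of receivers that have \emph{not} cached a given packet, applies the greedy/random coloring that assigns one color per such ``level-$\ell$ clique,'' and counts. For a packet of an I-file, the probability that exactly $\ell$ of the $n$ receivers miss it while the demand pattern is compatible contributes the $\binom n\ell \lambda_\ell(p_I)$ term; for a packet of a P-file, the same counting carries the extra factor $\delta$ because a P-file contributes only $\delta F/b$ packets to the graph, giving $\delta\binom n\ell\lambda_\ell(p_P)$. Summing over $\ell$ and over the two file types, and invoking the concentration of the random placement as $F\to\infty$ (so that the realized fractions match $p_fM$ and the per-packet normalization washes out), yields the bound $R^{(F)}\le\psi(\delta,p_I,p_P,M)+o(1)$ for any caching distribution of the restricted form $p_f=p_I$ for $f\in\mathcal F_I$, $p_f=p_P$ for $f\in\mathcal F_P$. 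The cache constraint $\sum_{f\in\mathcal F_I}p_f+\delta\sum_{f\in\mathcal F_P}p_f=1$ together with $|\mathcal F_I|\approx m/\kappa$, $|\mathcal F_P|\approx m(\kappa-1)/\kappa$ becomes exactly $p_I+\delta(\kappa-1)p_P=\kappa/m$, i.e. constraint \eqref{kappa}. One also always has the trivial uncoded-per-distinct-file bound, which under uniform demand gives the expected number of distinct requests $\bar m=m(1-(1-1/m)^n)$; taking the minimum of $\psi$ and $\bar m$, then optimizing the caching distribution subject to \eqref{kappa} and finally taking the infimum over $\delta$, produces the stated expression \eqref{upperRstar}. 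By Definition \ref{def:achievable-rate}, since $P_e^{(F)}\to 0$ (each receiver can invert the linear combination using its cache, and P-file receivers additionally decompress using the losslessly recovered reference I-file) and $\limsup_F R^{(F)}$ is at most the right-hand side, the pair $(R,M)$ is achievable, which is what bounds $R(M)$ from above.

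The main obstacle I expect is making the coloring bound on the \emph{enlarged} conflict graph rigorous: unlike the classical setting of \cite{ji14average,ji15order}, the vertex set here couples each P-file demand with its reference I-file, so the level-$\ell$ cliques are no longer homogeneous, and one must check that chaining a P-file's packets to the I-file's packets does not break the clique structure that the random-coloring bound relies on — concretely, that a receiver requesting $\overline W_f$ can still XOR-decode all needed I- and P-packets without a circular dependency. I would handle this by treating the ``virtual demand'' of a P-file requester as the concatenated packet set $\overline W_f$ plus the uncached packets of its I-file and verifying that the induced subgraph on these vertices satisfies the same local structure, so that the counting argument applies verbatim with the extra $\delta$ weighting; the details of this reduction are deferred to \cite{Journal}, and the remaining optimization over $(\delta,p_I,p_P)$ is a routine (if messy) calculus exercise that I would not grind through here.
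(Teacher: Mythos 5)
Your outline follows what is clearly the intended route: analyze the Comp-CACM scheme itself (library partition into I- and P-files, random fractional caching on the compressed library, coded multicast via coloring of the enlarged conflict graph), carry over the clique-counting analysis of \cite{ji14average,ji15order} with a $\delta$-weight for P-file packets, derive the cache constraint \eqref{kappa} from $\sum_{f\in\mathcal F_I}p_f+\delta\sum_{f\in\mathcal F_P}p_f=1$ with $|\mathcal F_I|\approx m/\kappa$, cap the rate by the naive-multicast term $\bar m$, and then optimize over $(p_I,p_P,\delta)$. Note, however, that this paper does not actually contain a proof of Theorem~\ref{thm:uniform compressed library}: it defers entirely to \cite{Journal}. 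So there is no in-paper argument to compare against at the level of detail; what can be said is that your bookkeeping (the constraint \eqref{kappa}, the $\delta$ factor coming from P-files having only $\delta F/b$ packets, the identity $\bar m=m(1-(1-1/m)^n)$ as the expected number of distinct uniform requests, and the final infimum over $\delta$) is consistent with the stated bound and with the scheme as described in Sections IV-A through IV-C.

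The substantive gap — which you yourself flag and then defer, exactly as the paper does — is the central lemma: that the expected normalized number of coded transmissions for the enlarged conflict graph (where a P-file request contributes both its own uncached packets and the uncached packets of its reference I-file, and where I- and P-files have different lengths and hence different packet counts) is at most $\psi(\delta,p_I,p_P,M)$. This is not a "verbatim" transfer of the homogeneous counting in \cite{ji14average,ji15order}: one must verify that the coloring remains valid (each receiver can decode every I- and P-packet it needs, with no circular dependencies between the I-part and the P-part of a coded transmission), that the heterogeneous packet populations still concentrate so that the per-level counts match $\binom{n}{\ell}\lambda_\ell(\cdot)$ as $F\to\infty$, and that the greedy partition under the $\kappa$-assumption indeed yields the file counts implicit in \eqref{kappa}. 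As a proof sketch your proposal is faithful to the scheme and correct in structure, but as a standalone proof it asserts rather than establishes precisely the step in which all the technical difficulty resides.
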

\begin{proof}
The proof of Theorem \ref{thm:uniform compressed library} is given in \cite{Journal}.
\end{proof}

Note that for a given $\delta$, the argument of the infimum on the right-hand side of \eqref{upperRstar} represents an upper bound on the Comp-CACM rate achieved with the caching distribution optimized for uniformly popular files, given by  $ p_{f} = p^*_I$, $\forall f \in \mathcal F_I$, and  $ p_{f} = p^*_P $, $\forall f \in \mathcal F_P$.
As mentioned in the previous section, the caching distribution of Comp-CACM is designed to minimize the rate of the corresponding correlation-aware delivery scheme. 
Consequently, the caching distribution  of Comp-CACM not only depends on the demand distribution, as in 
correlation-unaware schemes, 
but also depends on the correlations among the library content. Hence, under a uniform demand distribution, differently from correlation-unaware schemes, 
where all files are uniformly cached due to their uniform popularity, the Comp-CACM caching distribution assigns a larger portion of the memory to storing I-files compared to storing P-files, 
due to I-files having a higher {\em aggregate popularity} than P-files (see Sec. \ref{comp}). 
\section{Numerical Results and Discussions}~\label{sec:Simulations}
We numerically compare the rates achieved by the proposed correlation-aware scheme Comp-CACM, with respect to the following schemes: Local Caching with Unicast Delivery (LC/U), Local Caching with Naive Multicast Delivery (LC-NM), Random Popularity based (RAP) Caching with Coded Multicast Delivery (RAP/CM) \cite{ji15order}, all of which are state-of-the-art caching schemes that do not consider and exploit content correlation. We also compare the rates with an alternative correlation-aware scheme, termed CA-RAP/CM 
introduced in \cite{ISTC2016,Journal}, in which receivers store content pieces from the original (not inter-compressed) library based on their popularity and their correlation with the rest of the file library during the caching phase, and receive compressed versions of the requested files according to the information distributed across the network and their joint statistics during the delivery phase. Even though the caching of content in CA-RAP/CM is not done as efficiently as in Comp-CACM, the delivery is more efficient. 
 All of the schemes mentioned above are compared with the rate-memory function lower bound given in Theorem \ref{thm:LowerBound}. 

We consider a broadcast caching network with $n=10$ receivers requesting files according to a uniform demand distribution from a library composed of $m=100$ files, in which there are $\kappa$ files in the $\delta$-ensemble of each file. 

Fig. \ref{fig:alpha 0} displays the rate-memory trade-off as the memory size varies from $0$ to $100$ files for $\delta = 0.2$ and $\kappa = 2$. 
The figure plots the expected achievable rate $R$, in number of transmissions normalized to the file size, versus memory size (cache capacity) $M$, normalized to the file size. As expected, the correlation-aware schemes outperform schemes that are oblivious to the file correlations. Comp-CACM  achieves a $3.5\times$ reduction in the expected rate compared to LC-U and a $1.5\times$ reduction compared to RAP/CM for $M=20$. When comparing Comp-CACM and CA-RAP/CM, for small memory size $M$, CA-RAP/CM outperforms Comp-CACM, while the opposite is observed for large $M$. This indicates that joint compression of the file library has an especially relevant effect at large memory size, since each receiver is able to cache most of the compressed library. 
In fact, Comp-CACM achieves zero rate at around $M=60$, indicating that every receiver is able to cache the entire library in its compressed form. For small cache size $M$, however, the extra bits that Comp-CACM is forced to request in the delivery phase due to the initial library compression undermines its benefit with respect to  CA-RAP/CM.

\begin{figure}
  \centering
  \includegraphics[width=2.9in]{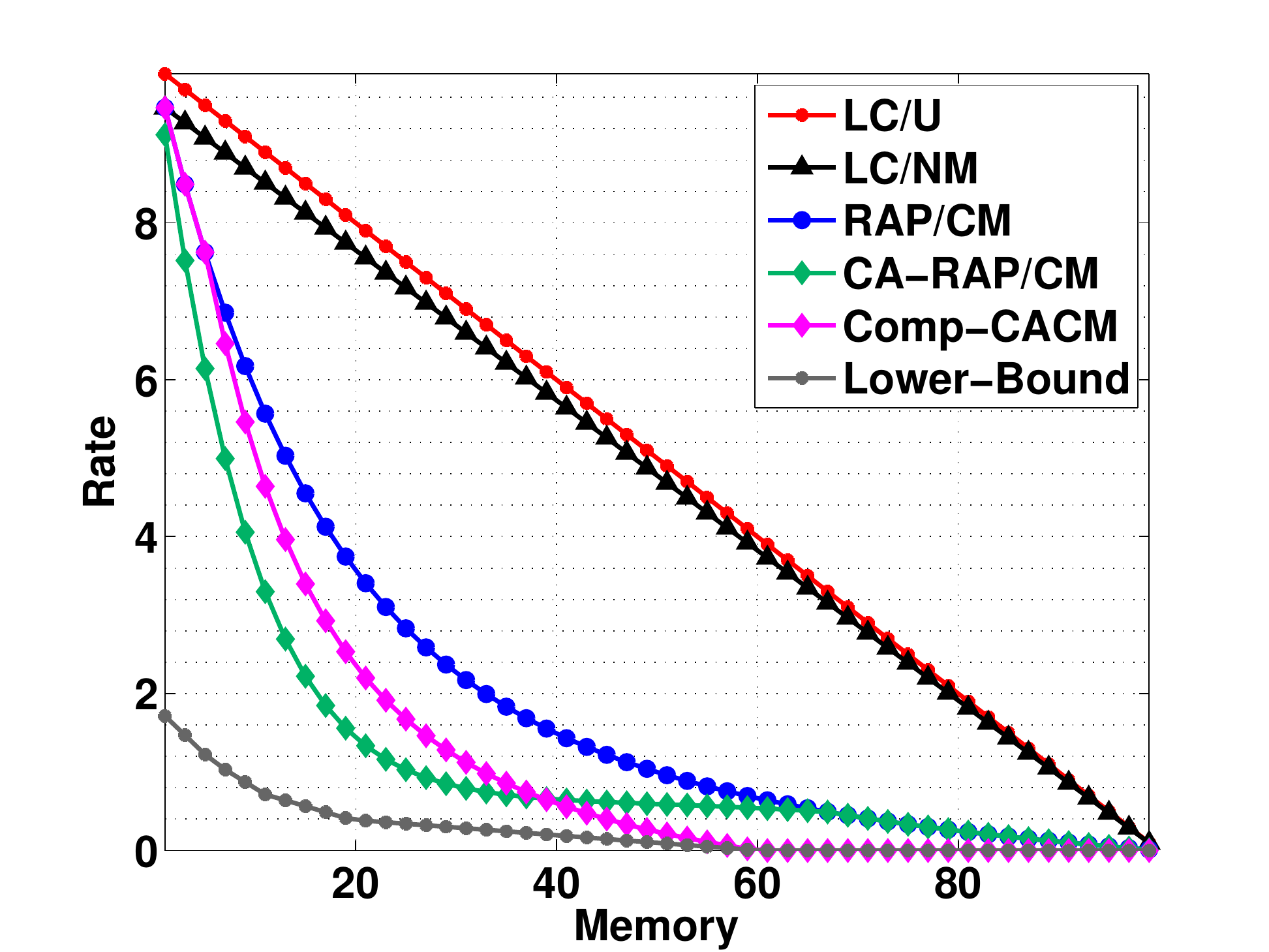}
  \caption{Comparison of Comp-CACM and the lower bound in Theorem \ref{thm:LowerBound} with other caching schemes in a network with uniform demand distribution, $n = 10$, $m = 100$, $\delta = 0.2$, and $\kappa = 2$.}
  \label{fig:alpha 0}
\end{figure}

\section{Conclusion}~\label{sec:Conclusion}
In this paper we have formulated the caching problem using information-theoretic tools and have provided a lower bound on the fundamental rate-memory trade-off. We have proposed Comp-CACM, an achievable correlation-aware scheme, in which, prior to storing the content files, the library is jointly compressed while taking into account the popularity of the files as well as their joint distribution. This leads to caching of the more relevant content during the caching phase, which, with the transmissions during the delivery phase that ensure lossless reconstruction of the demand, results in the more efficient delivery of content. We have then compared the performance of Comp-CACM with CA/RAP-CM, another correlation-aware scheme, 
as well as with the conventional caching schemes. 

Part of our ongoing work focuses on i) designing improved achievable schemes that incorporate the beneficial compression aspects of Comp-CACM into CA-RAP/CM, such as jointly compressing the cache configuration at each receiver, and ii) studying the effect of relevant system parameters that characterize the correlated library such as  $\delta$ and $\kappa$. 

\begin{appendices}

\section{Proof of Theorem \ref{thm:LowerBound}}\label{App: Lower Bound}
Denoting by $\mathcal D^{(j)}$ the set of all demands that contain at least $j$ distinct requests,
it is immediate to prove that  
\begin{eqnarray} 
R(M) & \geq&  P \left (  \dbf \in \mathcal  \bigcup_{j\geq  \ell}\mathcal D^{(j)}  \right ) R_\ell(M)
\label{eq:rate0}
\end{eqnarray} 
where $R_\ell(M)$ is the average rate needed to satisfy the user demands  containing exactly $\ell$ distinct requests.
Due to the uniform popularity assumption, we have that   
\begin{eqnarray}
R_\ell(M) =\frac{1}{|\mathcal D^{(\ell)} |} \sum_{\dbf \in \mathcal D^{(\ell)} }  R_\ell(M, \dbf )
\label{eq:rate1}
\end{eqnarray} 
where $R_\ell(M, \dbf )$ denotes the rate needed to satisfy user demand $\dbf$. Let $\{\dbf_1, \ldots, \dbf_{\lfloor \frac{ m}{\ell}\rfloor}\}$ denote a set of $\lfloor \frac{ m}{\ell}\rfloor$ demands in $\mathcal D^{(\ell)}$
such that the union of all their requested files is equal to the entire library. 
Furthermore, 
by the information-theoretic cut-set bound we have that 
\begin{eqnarray}
\sum_{\dbf \in  \left \{\dbf_1, \ldots, \dbf_{\lfloor \frac{ m}{\ell}\rfloor} \right \} }  R_\ell(M, \dbf )F  + \ell M F \geq 
H \! \left ( \!  \left \{ {\sf W}_f:  \,   f \in \mathcal F  \right  \}  \right ) 
\label{eq:rate2}
\end{eqnarray} 
Replacing \eqref{eq:rate2}  in \eqref{eq:rate0}, Theorem \ref{thm:LowerBound} follows. 
 
\bibliographystyle{IEEEtran}
\bibliography{ITW_arxiv_V1}

\begin{thebibliography}{1}
\providecommand{\url}[1]{#1}
\csname url@samestyle\endcsname
\providecommand{\newblock}{\relax}
\providecommand{\bibinfo}[2]{#2}
\providecommand{\BIBentrySTDinterwordspacing}{\spaceskip=0pt\relax}
\providecommand{\BIBentryALTinterwordstretchfactor}{4}
\providecommand{\BIBentryALTinterwordspacing}{\spaceskip=\fontdimen2\font plus
\BIBentryALTinterwordstretchfactor\fontdimen3\font minus
  \fontdimen4\font\relax}
\providecommand{\BIBforeignlanguage}[2]{{%
\expandafter\ifx\csname l@#1\endcsname\relax
\typeout{** WARNING: IEEEtran.bst: No hyphenation pattern has been}%
\typeout{** loaded for the language `#1'. Using the pattern for}%
\typeout{** the default language instead.}%
\else
\language=\csname l@#1\endcsname
\fi
#2}}
\providecommand{\BIBdecl}{\relax}
\BIBdecl

\bibitem{maddah14fundamental}
M.~A. Maddah-Ali and U.~Niesen, ``Fundamental limits of caching,'' \emph{IEEE
  Transactions on Information Theory}, vol.~60, no.~5, pp. 2856--2867, 2014.

\bibitem{maddah13decentralized}
M.~Maddah-Ali and U.~Niesen, ``Decentralized coded caching attains
  order-optimal memory-rate tradeoff,'' \emph{IEEE/ACM Transactions on
  Networking}, no.~99, pp. 1--8, 2014.

\bibitem{ji14average}
M.~Ji, A.~Tulino, J.~Llorca, and G.~Caire, ``On the average performance of
  caching and coded multicasting with random demands,'' in \emph{Proc. IEEE
  International Symposium on Wireless Communications Systems (ISWCS)}, 2014,
  pp. 922--926.

\bibitem{ji15order}
------, ``Order-optimal rate of caching and coded multicasting with random
  demands,'' \emph{arXiv:1502.03124}, 2015.

\bibitem{ji14groupcast}
M.~Ji, A.~M. Tulino, J.~Llorca, and G.~Caire, ``Caching and coded multicasting:
  Multiple groupcast index coding,'' in \emph{Proc. IEEE Global Conference on
  Signal and Information Processing (GlobalSIP)}, 2014, pp. 881--885.

\bibitem{timo2015rate}
R.~Timo, S.~S. Bidokthi, M.~Wigger, and B.~Geiger, ``A rate-distortion approach
  to caching,'' \emph{preprint http://roytimo. wordpress. com/pub}, 2015.

\bibitem{ISTC2016}
P.~Hassanzadeh, A.~Tulino, J.~Llorca, and E.~Erkip, ``Cache-aided coded
  multicast for correlated sources,'' \emph{in Proc. IEEE International
  Symposium on Turbo Codes and Iterative Information Processing (ISTC)}, 2016.

\bibitem{llorca2013network}
J.~Llorca, A.~M. Tulino, K.~Guan, and D.~Kilper, ``Network-coded caching-aided
  multicast for efficient content delivery,'' in \emph{Proc. IEEE International
  Conference on Communications (ICC)}, 2013, pp. 3557--3562.

\bibitem{Journal}
P.~Hassanzadeh, A.~Tulino, J.~Llorca, and E.~Erkip, ``Rate-memory trade-off for
  caching and delivery of correlated sources,'' \emph{preprint
  http://engineering.nyu.edu/eelab/people/Parisa}.

\end{thebibliography}

\end{appendices}

\end{document}